\documentclass[conference]{IEEEtran}
\IEEEoverridecommandlockouts
% The preceding line is only needed to identify funding in the first footnote. If that is unneeded, please comment it out.
\usepackage{cite}
\usepackage{amsmath,amsthm,amssymb,amsfonts}
\usepackage{algorithmic}
\usepackage{algorithm}
\newtheorem{theorem}{Theorem}

\usepackage{graphicx}
\usepackage{textcomp}
\usepackage{xcolor}
\usepackage{diagbox}
\def\BibTeX{{\rm B\kern-.05em{\sc i\kern-.025em b}\kern-.08em
    T\kern-.1667em\lower.7ex\hbox{E}\kern-.125emX}}
\def\@IEEEptsizeten{10}
\pagestyle{plain}
\begin{document}

\title{ Dynamic Quantum Group Key Agreement via Tree Key Graphs\\
{\footnotesize \textsuperscript{}}
\thanks{}
}

\author{\IEEEauthorblockN{1\textsuperscript{st} Qiang Zhao}
\IEEEauthorblockA{\textit{Computer Science and Engineering} \\
\textit{Chinese University of Hong Kong}\\
Hong Kong, China \\
qiangzhao@cuhk.edu.hk}
\and
\IEEEauthorblockN{2\textsuperscript{nd} Zhuohua Li}
\IEEEauthorblockA{\textit{Computer Science and Engineering} \\
\textit{Chinese University of Hong Kong}\\
Hong Kong, China \\
zhli@cse.cuhk.edu.hk}
\and
\IEEEauthorblockN{3\textsuperscript{rd} John C.S. Lui}
\IEEEauthorblockA{\textit{Computer Science and Engineering} \\
\textit{Chinese University of Hong Kong}\\
Hong Kong, China \\
cslui@cse.cuhk.edu.hk}
}
% \author{\IEEEauthorblockN{1\textsuperscript{st} Given Name Surname}
% \IEEEauthorblockA{\textit{dept. name of organization (of Aff.)} \\
% \textit{name of organization (of Aff.)}\\
% City, Country \\
% email address or ORCID}
% \and
% \IEEEauthorblockN{2\textsuperscript{nd} Given Name Surname}
% \IEEEauthorblockA{\textit{dept. name of organization (of Aff.)} \\
% \textit{name of organization (of Aff.)}\\
% City, Country \\
% email address or ORCID}
% \and
% \IEEEauthorblockN{3\textsuperscript{rd} Given Name Surname}
% \IEEEauthorblockA{\textit{dept. name of organization (of Aff.)} \\
% \textit{name of organization (of Aff.)}\\
% City, Country \\
% email address or ORCID}
% \and
% \IEEEauthorblockN{4\textsuperscript{th} Given Name Surname}
% \IEEEauthorblockA{\textit{dept. name of organization (of Aff.)} \\
% \textit{name of organization (of Aff.)}\\
% City, Country \\
% email address or ORCID}
% \and
% \IEEEauthorblockN{5\textsuperscript{th} Given Name Surname}
% \IEEEauthorblockA{\textit{dept. name of organization (of Aff.)} \\
% \textit{name of organization (of Aff.)}\\
% City, Country \\
% email address or ORCID}
% \and
% \IEEEauthorblockN{6\textsuperscript{th} Given Name Surname}
% \IEEEauthorblockA{\textit{dept. name of organization (of Aff.)} \\
% \textit{name of organization (of Aff.)}\\
% City, Country \\
% email address or ORCID}
% }

\maketitle
\thispagestyle{plain}
\begin{abstract}
Quantum key distribution (QKD) protocols are essential to guarantee information-theoretic security in quantum communication. Although there was some previous work on quantum group key distribution, they still face many challenges under a ``\textit{dynamic}'' group communication scenario. In particular, when the group keys need to be updated in real-time for each user joining or leaving to ensure secure communication properties, i.e., forward confidentiality and backward confidentiality. However, current protocols require a large amount of quantum resources to update the group keys, and this makes them impractical for handling large and dynamic communication groups. In this paper, we apply the notion of ``{\em tree key graph}'' to the quantum key agreement and propose two dynamic Quantum Group Key Agreement (QGKA) protocols for a join or leave request in group communications. In addition, we analyze the quantum resource consumption of our proposed protocols. The number of qubits required per join or leave only increases logarithmically with the group size. As a result, our proposed protocols are more practical and scalable for large and dynamic quantum group communications.
\end{abstract}
%% Keywords. The author(s) should pick words that accurately describe
%% the work being presented. Separate the keywords with commas.
\begin{IEEEkeywords}
quantum communication, quantum key distribution, quantum group key agreement, tree key graph
\end{IEEEkeywords}

\section{Introduction}
Traditionally, most network applications are based on unicast or multicast to transfer information (i.e., packets). For a large group communication session, the unicast or multicast servers are often required to send packets to a large number of authorized clients within a communication group. In classical networks, multicast group communication has been successfully used to provide an efficient delivery service to a large communication group ~\cite{wong2000graph}. Unfortunately, classical key distribution (e.g., Diffie-Hellman key exchange) can easily be broken by quantum computing. To avoid such quantum attacks, quantum group key distribution aspires to enable efficient and information-theoretical secure communication in a large-scale communication group.

For quantum group communications, one cannot simply use multicast to distribute group keys like in classical communication. This is because quantum states cannot be copied according to the no-cloning theorem~\cite{wootters1982single} in quantum dynamics. Although there are some multi-party quantum group key distribution protocols~\cite{epping2017multi,cao2022evolution,bian2023high}, they often consume large amounts of quantum resources to distribute group keys. Hence, they limit the scalability of the group size. Moreover, developers foresee that there will be many services that require efficient quantum key distribution, e.g. such as conferences, information exchange, and distributed computing. As a result, the quantum group key distribution remains a critical study to ensure the confidentiality, authenticity, and integrity of messages delivered between legitimate members in a large and dynamic quantum communication group. 

Quantum Group Key Distribution (QGKA) is an extension protocol of Quantum Key Distribution (QKD) in group sessions. The Quantum key agreement (QKA) was proposed by Zhou, Zeng, and Xiong~\cite{zhou2004quantum}, which is used to generate a shared key between ``\textit{two}'' legitimate parties. Since then, various QKA protocols for a limited number of users have been developed, and they utilize various quantum resources or quantum techniques to generate group keys, such as Bell states~\cite{shukla2014protocols}, single-photon states~\cite{liu2013multiparty}, cluster states~\cite{liu2019multiparty} and GHZ states~\cite{zeng2016multiparty}, and so on. Group key from one or more servers needs to be distributed to many authorized users in a quantum network. However, for many of these protocols, the number of required qubits increases ``\textit{linearly}'' with the group size, which implies that it is difficult to apply them to large-group communications. In particular, when the communication group is {\em dynamic}, i.e., when users frequently join or leave the session, it is necessary to ensure \textit{ backward and forward confidentiality} in the communications, meaning that the newly joining user cannot access the group communications before she joins the group (backward confidentiality), while the user who leaves cannot access any group communication upon her departure (forward confidentiality). To provide these backward/forward confidentiality properties, one needs to update the group keys in real time whenever there is any membership change in the group communication. However, existing QKA protocols entail a substantial quantum resource overhead for each group key update. 

In group communications, group key management usually depends on the key graph~\cite{wong2000secure}, which is a directed acyclic graph with the nodes of users and keys. In a key graph, the key server knows the keys of all users, and the key node is shared by her child users. The quantum group communications are usually based on a star key graph, i.e., the key server is directly connected to the key node of each user in the group. When a user leaves a group session with $N$ users, if a two-party QKA with Bell states~\cite{shi2013multi,shukla2014protocols} is used to distribute the $1$-bit group key, each user needs to prepare a Bell state. One of the entangled particles needs to be transmitted to other participants in turn, undergoing encryption in the process. The group key is subsequently extracted through Bell measurement. Therefore, the remaining $N-1$ users will consume $2(N-1)$ entangled qubits for a $1$-bit group key. If a multi-party QKA with GHZ states is used, the server needs to create $(N-1)$-qubit entangled state to perform a multi-party QKA. In both cases, a large number of entangled qubits are consumed whenever a user joins or leaves the communication group. For a two-party QKA, Bell states are easy to prepare, but the protocol needs to be performed $N-1$ times for existing users in the group. In terms of multi-party QKA, although the protocol only needs to be performed once, the preparation of the $(N-1)$-qubit GHZ state is more difficult compared with the preparation of the Bell state, particularly for large $N$. For each group key update, a substantial quantum resource overhead is required to perform preparation, transmissions, measurements, and quantum operations. The above reasons are why these current protocols severely hinder the development of large dynamic quantum applications.

In quantum group communication, the problem of how to distribute a secure key to a group of users has been addressed in the quantum cryptography literature~\cite{epping2017multi,chou2018dynamic,liu2019multiparty,xu2020secure,bian2023high}. However, no one has addressed the associated frequent group key changes and the scalability problem for a large and dynamic group. In classical communications, the hierarchy approach of keys~\cite{wallner1999key,wong2000graph}, called the tree key graph or key tree, has been applied in group key distributions to improve scalability and reduce the number of encryptions. To support a large-scale and dynamic quantum group communication, we introduce the hierarchy approach to quantum key distribution protocol and propose the ``{\em key tree-based}'' QGKA protocol for a join or leave to reduce quantum resource consumption. Our approaches not only can guide the QKA protocol to update the keys in the key tree, but can also make full use of the existing keys to build the rekey messages. Our main contributions to this paper are as follows:

\begin{itemize}
    % \item[$\bullet$] We propose a general dynamic quantum group key management scheme based on the key tree to address the scalability problem of quantum group communications. This way, a large group can be divided into many small subgroups, 
    % %which can easily realize quantum group key distribution. 
    % and this helps to reduce resource consumption
    % to support quantum group communication.
    % \item[$\bullet$] We apply the key tree strategy to the existing QKA protocol to improve the efficiency of generating keys and reduce the number of required qubits.             
    \item[$\bullet$] We propose two dynamic QGKA protocols for a join or leave event to redistribute secure group keys. Existing group users are only required to update the keys associated with a joining or leaving user in the key tree, which can significantly reduce the quantum resource consumption.
    \item[$\bullet$] We present the security of our algorithms by examining both generation and distribution processes in the key tree framework. Our analysis shows that the generation is secure against external and internal quantum attacks, and the distribution designed based on the key tree can be used to enhance its resistance against quantum attacks. Consequently, both reinforce the overall security of dynamic group communications.
    \item[$\bullet$] We present the consumption of quantum resources in our algorithms and compare them with some well-known QKA protocols. We show that the number of qubits required per join or leave only increases ``\textit{logarithmically}'' with group size. As a result, our protocols are scalable to large groups with frequent joins and leaves.  
\end{itemize} 
\section{Related works}
In quantum communication, quantum key distribution~\cite{sasaki2011field,dynes2019cambrige,chen2021network} has attracted much attention since it can realize information-theoretical security that cannot be achieved by classical key distribution. For a communication group, if each key is randomly generated by the key server, then the key server has full control over all group keys. However, when the server is untrusted or malicious, group keys face the risk of compromise in group communications. To address this issue, QKA~\cite{zhou2004quantum,zeng2016multiparty} is introduced in quantum group communications. In this framework, the group key is generated collaboratively by all participants and cannot be controlled by any single one. 

QKA is a high-security extension protocol of QKD~\cite{bennett2014quantum,ekert1991quantum,bennett1992quantum}. In QKA based on Bell states~\cite{shi2013multi,shukla2014protocols}, users utilize quantum operations and Bell measurements to extract the agreed key. Nonetheless, this approach encounters challenges in terms of scalability and efficiency when applied to larger groups. To facilitate the implementation of QKA, Huang et al.~\cite{shukla2014protocols} use single photons to realize QKA based on the travelling mode. However, efficiency levels remain modest. Sun et al.~\cite{sun2016efficient} sought to optimize key efficiency by introducing dense coding in their proposal for efficient multi-party QKA with four-qubit cluster states. Despite these innovations, quantum transmission mainly relies on unicast in their protocols. To accommodate multicast communications, QKA based on GHZ states~\cite{xu2014novel,zeng2016multiparty} has emerged. It allows the server to simultaneously transmit GHZ states to multiple users, empowering them to engage in quantum operations and GHZ measurements collaboratively for group key agreement, constituting a multicast distribution mechanism. However, the practical realization of multi-qubit GHZ states for a large group faces substantial challenges.   
% multi-party protocols relied on the circle connection between users to agree on keys. In quantum transmission, they are unicast.

For a communication group, when a user joins (or leaves) a group session, the communication group key must be updated to ensure that the user cannot obtain previous (or subsequent) communications. For current large quantum communication networks~\cite{cao2022evolution,bian2023high}, the update of group keys will incur a substantial overhead of qubits in preparation, transmission, operation, and measurement, which seriously limits the overall scalability of quantum group communications. Consequently, it remains a challenge to realize dynamic QGKA with frequent joins or leaves in large-scale communication networks.  

In classical networks, the tree key graph as a logical hierarchy of keys is used to facilitate group key distribution among users~\cite{wong2000graph,mittra1997multicasting}. For quantum networks, we introduce a key tree-based hierarchical structure to the QGKA so to update the group keys when users join or leave the group. This approach helps us to realize a secure and efficient dynamic quantum group communication.

\section{Background}

Since the previous QKA protocols\cite{shukla2014protocols,huang2016improved,sun2016efficient,zeng2016multiparty,chou2018dynamic} consume a large number of quantum resources to update the group key when a user joins or leaves a communication group, this will limit the scalability of the group and affect the quantum communication performance. Hence, we propose to use the group key management strategy on a key tree to guide the QKA to update the group keys. We will later show that this will provide a scalable feature for quantum group communication. In what follows, we first provide some background on the quantum key agreement (QKA), then discuss the concept of ``{\em tree key graph}''. 

\subsection{Multi-party QKA}
Quantum key agreement (QKA) is used for all participating users to agree on a single group key for private group communication.
To prevent a malicious server or user from determining the group key,
QKA must satisfy two essential requirements: %(1) each user can not determine the group key, but she has access to change it; 
(1) The group key has to be generated by {\em all} users. Each user can also request to {\em change/update} the group key;
(2) only users in a communication group can obtain the group key, users outside the group can not obtain the group key.
\begin{figure}[htbp]
	\centering
	\includegraphics[width=3in]{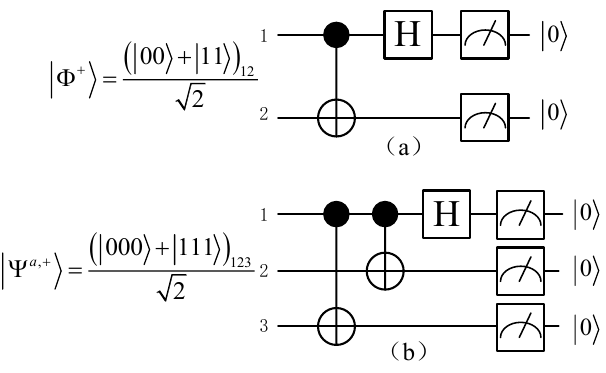}% Here is how to import EPS art
	\caption{Entanglement measurement. (a) Bell measurement. (b) GHZ measurement. When the initial quantum state is $|\Phi^+\rangle$ or $|\Psi^{a,+}\rangle$, the quantum state after measurement is $|00\rangle$ or $|000\rangle$, respectively.}
\label{fig-measurement}
\end{figure}
In QKA, the quantum measurement is a critical step in extracting the shared key. For two parties, the measurements of two-qubit entangled states are often called EPR or Bell measurements. As shown in Fig.\ref{fig-measurement}(a), there can be four measurement results:
\begin{equation}\label{eq-bell-measure}
\begin{aligned}
{\left| {{\Phi ^ \pm }} \right\rangle _{12}} =&\frac{{{{\left( {\left| {00} \right\rangle  \pm \left| {11} \right\rangle } \right)}_{12}}}}{{\sqrt 2 }}\stackrel{M}\longrightarrow|00\rangle_{12},|10\rangle_{12},\\
{\left| {{\Psi ^ \pm }} \right\rangle _{12}} =&\frac{{{{\left( {\left| {01} \right\rangle  \pm \left| {10} \right\rangle } \right)}_{12}}}}{{\sqrt 2 }}\stackrel{M}\longrightarrow|01\rangle_{12},|11\rangle_{12},\\
\end{aligned}
\end{equation}
where the subscript indexes indicate the position of the first and second qubits. For multiple parties, the measurements of three or more entangled qubits are called GHZ measurements. As shown in Fig.\ref{fig-measurement}(b), the three-qubit GHZ measurements generate eight results:
\begin{equation}\label{eq-ghz-mesure}
\begin{aligned}
{\left| {{\Psi ^{a, \pm }}} \right\rangle _{123}} =&\frac{{{{\left( {\left| {000} \right\rangle  \pm \left| {111} \right\rangle } \right)}_{123}}}}{{\sqrt 2 }}\stackrel{M}\longrightarrow|000\rangle_{123},|100\rangle_{123},\\
{\left| {{\Psi ^{b, \pm }}} \right\rangle _{123}} =&\frac{{{{\left( {\left| {001} \right\rangle  \pm \left| {110} \right\rangle } \right)}_{123}}}}{{\sqrt 2 }}\stackrel{M}\longrightarrow|001\rangle_{123},|101\rangle_{123},\\
{\left| {{\Psi ^{c, \pm }}} \right\rangle _{123}} =&\frac{{{{\left( {\left| {010} \right\rangle  \pm \left| {101} \right\rangle } \right)}_{123}}}}{{\sqrt 2 }}\stackrel{M}\longrightarrow|010\rangle_{123},|110\rangle_{123},\\
{\left| {{\Psi ^{d, \pm }}} \right\rangle _{123}} =&\frac{{{{\left( {\left| {011} \right\rangle  \pm \left| {100} \right\rangle } \right)}_{123}}}}{{\sqrt 2 }}\stackrel{M}\longrightarrow|011\rangle_{123},|111\rangle_{123},
\end{aligned}
\end{equation}
where the superscript indexes indicate the different entangled states, and the subscript is the position of the qubit in the GHZ state. Note that in Eq \eqref{eq-bell-measure} and \eqref{eq-ghz-mesure}, the different entangled states have different measurement results, so one can deduce the measured entanglement states based on Bell or GHZ measurement. If the entangled state is changed, one can determine the quantum operations on qubits. For example, if the initial quantum state is $|\Phi^+\rangle_{12}$ or $|\Psi^{a,+}\rangle_{123}$, when the state is acted by the unknown quantum operation, the Bell or GHZ measurement result is $|01\rangle$ or $|011\rangle$, so one can deduce that the measured state is $|\Psi^+\rangle_{12}$ or $|\Psi^{d,+}\rangle_{123}$, thus one can deduce that the operation gate $X$ acted on the first qubit. Therefore, when one party uses quantum operation to act on Bell or GHZ states, others can extract the key (or quantum operation) through Bell or GHZ measurement to generate a shared group key.  

For a group session with multiple parties, one can utilize the multi-party QKA to agree on the group key. A typical multicast QKA, which was proposed by Zeng et.al~\cite{zeng2016multiparty}, as shown in Fig.\ref{fig-MQKA-diagram}. The server $s$ prepares $N$-qubit GHZ states and shares them with other $N-1$ users. Formally, the GHZ state is:
\begin{equation}\label{eq-GHZ-state}
    \left| {{\rm{GHZ}}} \right\rangle {\rm{ = }}\frac{1}{{\sqrt 2 }}\left( {{{\left| 0 \right\rangle }^{ \otimes N}} + {{\left| 1 \right\rangle }^{ \otimes N}}} \right),
\end{equation}
where $N$ is the number of users and servers in the quantum communication group. This protocol contains three steps:
\begin{itemize}
\item The server first prepares a series of GHZ states of $N$ particles, then randomly inserts decoy states into the sequences, and sends them to each user in the group. 

\item Each user can use the decoy states to detect eavesdroppers, if any. If the channel is secure, the user applies Pauli gates on qubits in his sequence based on his operation keys. To prevent the group key from being controlled by a single user or server, each GHZ state will be assigned randomly to one user or server as a leader, while the others will act as followers. Each user then inserts his decoy states in his sequence and sends it to the leader.

\item The leader of each GHZ state performs GHZ measurements and announces the results. Each member can extract the operation keys of all members and perform XOR operations between them to generate the group key.
\end{itemize}
\begin{figure}[htbp]
	\centering
	\includegraphics[width=3in]{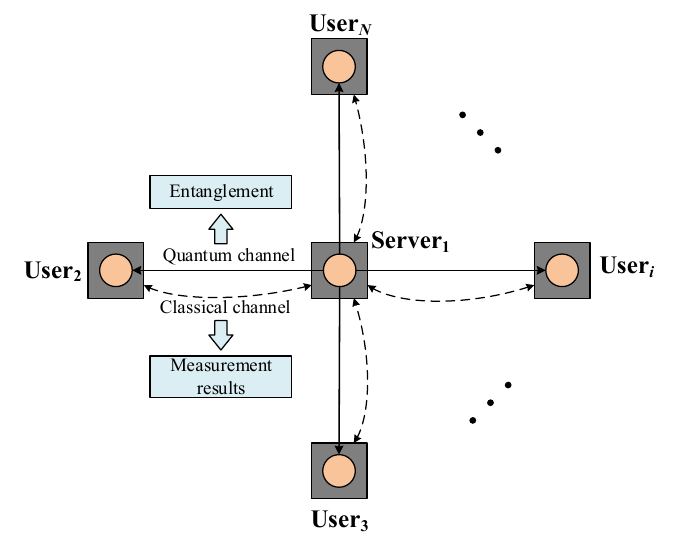}% Here is how to import EPS art
	\caption{Schematic diagram of $N$-party QKA to generate a group key. Circles are qubits, and the solid black lines connecting them represent quantum channels to distribute GHZ entangled states. The dashed lines between the Server and the Users represent classical channels that are used to transmit measurement results.}
\label{fig-MQKA-diagram}
\end{figure}  

%\begin{teaserfigure}
%   \includegraphics[width=\textwidth]{sampleteaser}
%   \caption{Seattle Mariners at Spring Training, 2010.}
%   \Description{Enjoying the baseball game from the third-base
%   seats. Ichiro Suzuki preparing to bat.}
%   \label{fig:teaser}
% \end{teaserfigure}
When a user joins or leaves a group session, the remaining users need to generate a new group key to ensure backward or forward confidentiality in the group communications. Based on the above multi-party QKA, a large number of entangled qubits and measurements will be required to update group keys for large groups (i.e., for a group of size $N=10,000$, we need around $10,000$ entangled qubits for GHZ state, and $10,000$ measurements for GHZ state in Fig.\ref{fig-measurement}(b)).

\subsection{Tree Key Graph}
For a quantum group communication session, secure communication between users in the group depends on the shared group key, which can be generated by multi-party QKA. In dynamic group quantum communication, to ensure forward or backward confidentiality, the group key needs to be updated for a join or leave. However, as we stated previously, a new group key distribution based on previously proposed QKD or QKA protocols will consume significant amount of quantum resources, which is inefficient for large groups. In classical communications, a tree key graph is a virtual hierarchy topology instead of a logical star key graph in group key management~\cite{wong2000graph}. This tree key graph can be used to reduce the number of encryptions in communications. Let us briefly present the key-tree concept here.

\begin{figure}[htbp]
	\centering
	\includegraphics[width=3in]{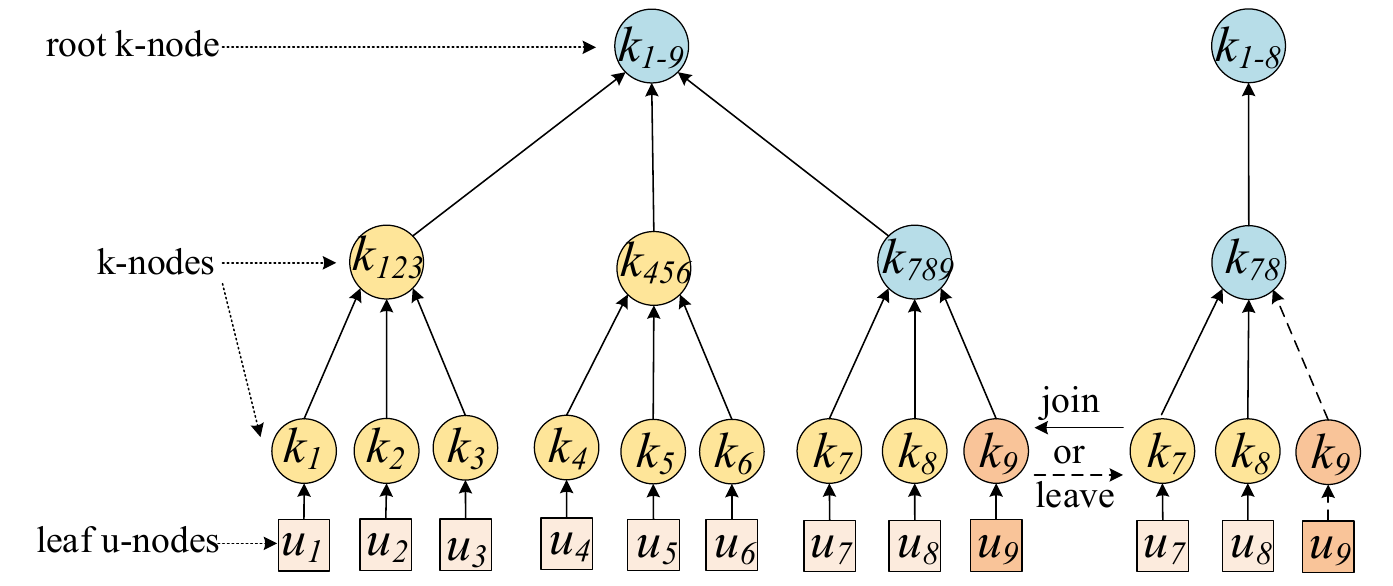}
	\caption{A tree key graph with nine users.}
\label{fig-nine-key-tree}
\end{figure}

A key tree is a directed acyclic graph $G$ as depicted in Fig.\ref{fig-nine-key-tree}, which consists of user nodes and key nodes, denoted as "{\em $u$-nodes }" and "{\em $k$-nodes}" respectively. Each $u$-node only has an outgoing edge, while each $k$-node has one or more incoming edges. If a $k$-node only has incoming edges but no outgoing edges, then we call it a root node. A tree key graph $G$ is a single-root tree. To accurately describe the tree key graph, one needs to introduce two parameters, namely the height $h$ and the degree $d$. The height $h$ is defined as the number of edges along the longest directed path from the $u$-node to the root $k$-node in the tree $G$. The degree $d$ is defined as the maximum number of a $k$-node's incoming edges in the tree $G$. For example, in the tree key graph in Fig.\ref{fig-nine-key-tree}, we have $h=3$ and $d=3$. Note that when $N$ is group size, we have $N=d^{h-1}$, and the number of keys is $(d^h-1)/(d-1)\approx (d/(d-1))N$ in a full and balanced tree key graph.

In a tree key graph, each user possesses ``{\em multiple keys}'', which are keys along the path from her $u$-node to the root $k$-node, so a user has at most $h$ keys. The key at the root of the key tree is used as the group key, and it is shared by all users. For example, we present a tree key graph with nine users in Fig.\ref{fig-nine-key-tree}. To illustrate the relationship of users and keys, the set of keys held by the user $u$ is represented by the associated key function $\textit{keyset(u)}$, i.e., $keyset(u_9)=\{k_9,k_{789},k_{1-9}\}$. The set of users that hold the key $k$ is represented by the associated user function $\textit{userset(k)}$, i.e., $userset(k_{789})=\{u_7,u_8,u_9\}$.  

As shown in Fig.\ref{fig-nine-key-tree}, when the user $u_9$ joins the group, to ensure that she cannot obtain the previous communications, the server needs to update the keys $\{k_{78},k_{1-8}\}$ if $u_9$ is connected to the joining node $k_{78}$. To achieve this, the server encrypts the new key $k_{789}$ ($k_{1-9}$) with $k_{78}$ ($k_{1-8}$) based on some classical encryption algorithms, and redistributes these keys to $\{u_7,u_8\}$ (the users in ${userset(k_{1-8})}$), and uses $k_9$ to encrypt $k_{789}$ and $k_{1-9}$ to send to $u_9$. In this process, the server performs $4$ encryptions instead of $2$ in the star key graph. When a user $u_9$ leaves the group, to ensure she cannot obtain any subsequent communications, the server needs to update keys $\{k_{789},k_{1-9}\}$. To achieve this, the server encrypts the new key $k_{78}$ with $k_7$ ($k_8$) so to securely send it to $u_7$ ($u_8$), and encrypts the new group key $k_{1-8}$ with the subgroup keys $k_{78}$, $k_{123}$ and $k_{456}$, and then sends to users in ${userset(k_{1-9})}-u_9$. In the process, the server performs only $5$ encryptions instead of $8$ in the star key graph. Therefore, the average cost of encryptions for this case is $4.5$ for a join and leave. For a large group size of $N$, the server needs to perform $O(\log N)$ encryptions in the tree key graph rather than $O(N)$ in conventional a single-level connection topology. 

For a large communication group in a quantum network, when a user joins (or leaves), we need to ensure that she cannot access previous (or subsequent) communications. Hence, the group key shared by the online group members must be updated. In QKA protocols with Bell or GHZ, the consumption of quantum resources, i.e., the number of qubits, for each key update is $O(N)$, which implies that it is a technical challenge to address large-scale dynamic quantum group communications.
 
% In a group session with $N$ users, when a user leaves, the server needs to prepare $2(N-1)$ Bell states and perform $N-1$ times of two-party QKA to generate the individual keys between users and server, and then the server updates the group key based on these individual keys, which is inefficient for a large communication group. If the server uses the multi-party QKA with GHZ states, the server needs to share $N-1$-qubit GHZ states with the remaining users, which is difficult to implement for a large group. Moreover, the quantum resource consumption is $O(N)$ in the two protocols, i.e., the number of qubits. This implies that there is a technical challenge in using the two-party or multi-party QKA to address large-scale dynamic quantum group communications.   
The tree key graph as a hierarchical key approach can be used to assist servers and users in distributing the group keys. Based on the core idea of the key tree for classical networks, it can also be generalized to quantum group communication with frequent joins or leaves. Hence, we will propose the key tree-based QGKA protocol for a join or leave request to instruct the key server to redistribute group keys, enabling seamless and robust communications within dynamic groups.
% To address the problem, we introduce the tree key graph as a hierarchy key structure into QKA, and propose the dynamic QGKA protocol for a join or leave request. The key tree can be used to instruct QKA to generate the new keys and assist the server in distributing various keys to all group users. 

\section{Dynamic quantum group key agreement}
In a quantum network, the quantum key distribution as a typical distribution protocol has been applied to number of fields~\cite{liu2022towards,cao2022evolution}. 
% QKA is a subset of quantum key distribution, which can be used to generate a steady stream of information-theoretic secure keys. 
The current network architectures have different layers based on definitions and applications~\cite{masahide2011field,cao2019kaaskey,chen2021integrated,ITUT2019overview}. One of the most typical quantum network architecture is a three-layer architecture of the quantum key distribution network. This architecture includes three logical layers: (1) the infrastructure layer, (2) the key management layer, and (3) the application layer. Note that the key management layer is a bridge connecting the infrastructure layer and the application layer. As shown in Fig.\ref{fig-Three-layer-quantum-network}, this layer not only can instruct the quantum nodes of the infrastructure layer to perform QKA between nodes, but also can provide various services for the application layer. To address dynamic quantum group communications, our approach contains two processes: the generation and distribution of new keys. The generation process aims to reduce quantum resource consumption by leveraging QKA to update the keys associated with the joining or leaving user in the key tree. The distribution process is designed to take full advantage of the keys unknown to the joining or leaving user in the key tree so to construct the rekey messages. 
\begin{figure}[htbp]
	\centering
	\includegraphics[width=2in]{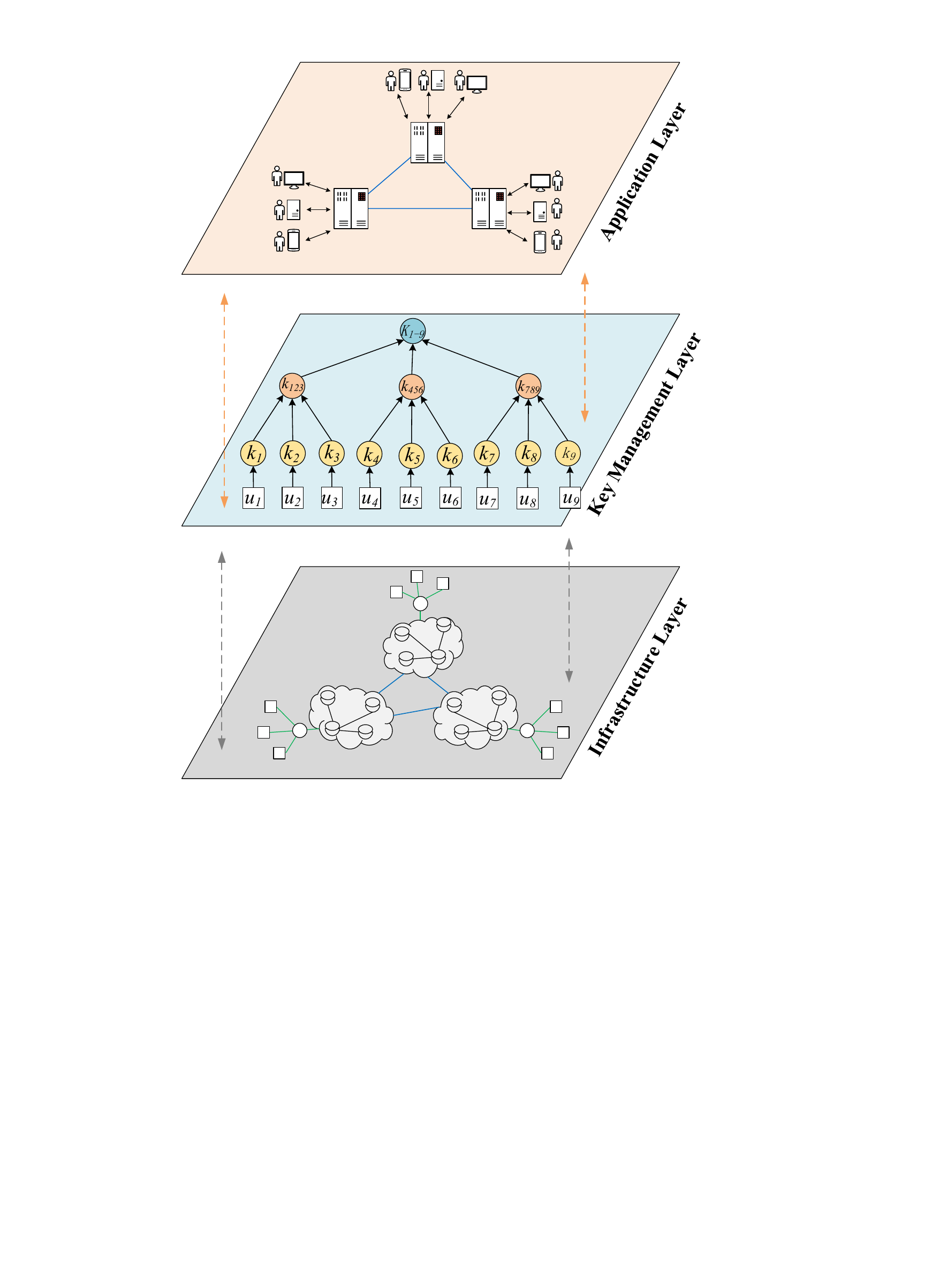}
	\caption{Three-layer architecture quantum network.}
\label{fig-Three-layer-quantum-network}
\end{figure}

\subsection{QGKA with a join}

\begin{algorithm}
     \caption{QGKA protocol for a join request}
     \label{alg-qka-join}
     \begin{algorithmic}[1]
         \STATE The server $s$ grants the user $u$ a join request.\\
         \textbf{Generation:}
         \FOR{$i=0$ to $h-2$}  
         \STATE Let $k_{h-2}$ be the parent k-node of $u$, and $k_{i}$ be the parent k-node of $k_{i+1}$.        
         \STATE The server $s$ prepares the Bell state $|\psi\rangle_{i}$, randomly inserts the decoy states in one of the Bell sequences, and sends it to user $u$.           
         \STATE The server $s$ publishes the bases and positions of the decoy states for channel checking. If the error rate is not higher than the threshold, they continue.        
         \STATE The server $s$ is a leader at all odd positions in the Bell sequence; otherwise, she is a follower. The leader performs one of the four Pauli operations $\{I, X, Y, Z\}$, and the followers perform one of $I$ and $X$. The follower then randomly inserts decoy states in the Bell sequence she holds and sends it to the leader.         
         \STATE After channel checking, the leader performs Bell measurements and publishes the measurement results. They can exact the operation keys of the other and generate the new key $k_{i}^{\prime}=\kappa_i^s\oplus \kappa_i^u$.
         \ENDFOR \\
         \textbf{Distribution:}
         \FOR{$j=0$ to $h-2$}
         \STATE Let $KM_j=\{\{k_0^{\prime}\}_{k_0},...,\{k_j^{\prime}\}_{k_j}\}$, the server $s$ sends $KM_j$ to the users of $userset(k_j)-userset(k_{j+1})$ in Fig.\ref{fig-N-key-tree}. 
         \ENDFOR           
     \end{algorithmic}
 \end{algorithm}

 When a user $u$ joins, we first locate her joining point $k$-node. In each subgroup of the key tree, the parent node of the user from the smallest subgroup can be used as the joining point. To ensure that she cannot access the previous information (e.g., backward confidentiality), we need to update those keys along the path from her parent $k$-node to the root $k$-node, as we have shown in Fig.\ref{fig-N-key-tree}. In the generation process, let $k_{i-1}$ be the parent $k$-node of $k_i, i=1,2,...,h-1$, and the root $k$-node $k_0$ is the group key. To deal with the malicious attacks, the new keys $k_i^{\prime}$ are generated jointly by the server $s$ and the joining user $u$ based on a two-party QKA protocol. In the distribution process, the old keys $k$ are unknown to the joining user $u$, i.e., all yellow nodes and some blue nodes in Fig.\ref{fig-N-key-tree}. The server $s$ can use these keys as seed keys in the Advanced Encryption Standard (AES)-256, and refresh them every second~\cite{pan2018satellite}. Therefore, the new group keys $k_i^{\prime}$ can be encrypted by $k_i$ based on AES-256 to construct the rekey messages $KM_i=\{k_i^{\prime}\}_{k_i}$. Finally, they are redistributed to group users based on the key tree in Fig.\ref{fig-N-key-tree}. In this work, we propose the QGKA protocol for a join request so to realize secure and efficient group communications. This algorithm is shown in Algorithm \ref{alg-qka-join}.

\begin{figure}
    \centering
    \includegraphics[width=3in]{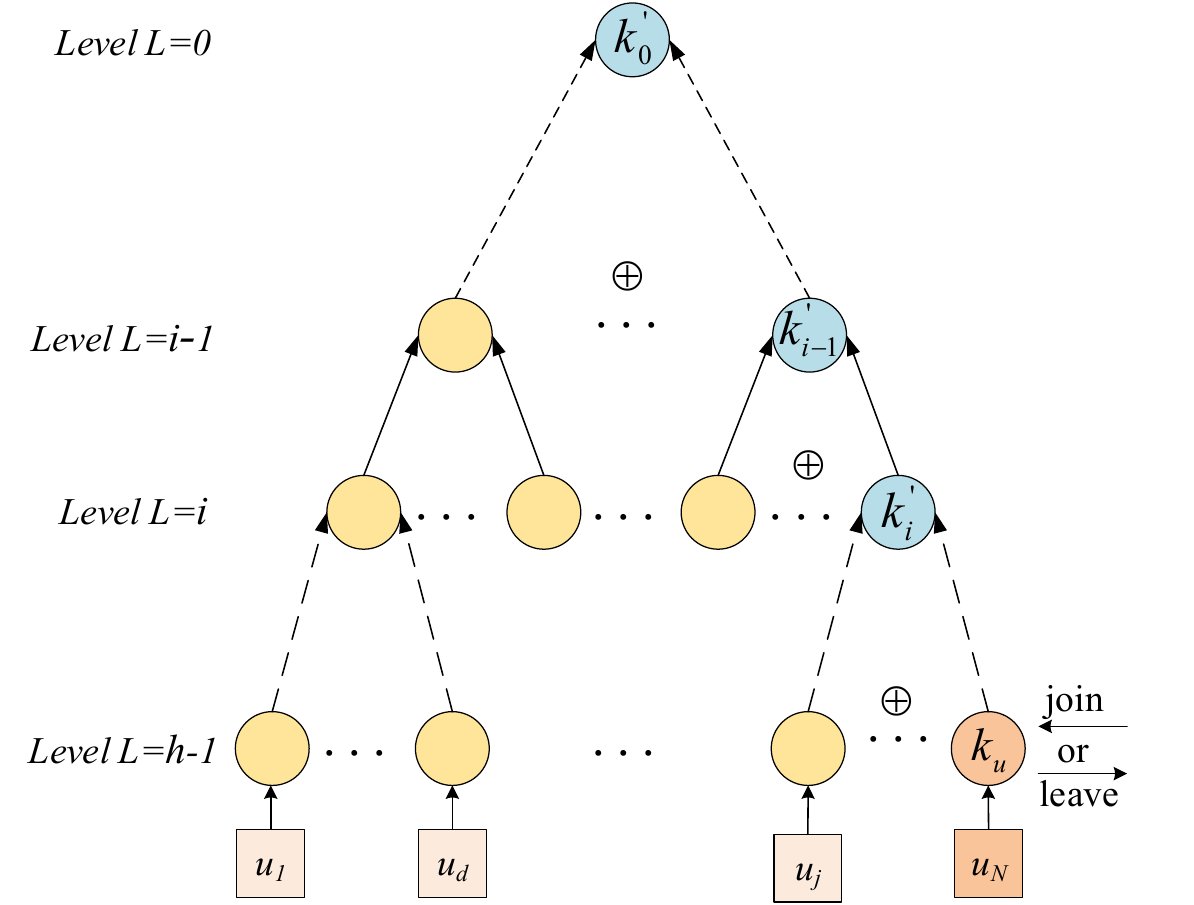}
    \caption{Schematic diagram of the tree key graph $G$ with $N$ users. The $h$ is the height of $G$, and the $d$ is the degree of $G$. The keys of the blue nodes need to be updated for a join or leave request.}
    \label{fig-N-key-tree}
\end{figure}

\begin{table}[htbp]
    \centering
    \caption{Key generation of two-party QKA}
    \label{table-two-key-generation}
    \begin{tabular}{|c|c|c|c|c|}
         \hline        \diagbox{Follower}{Key}{Leader} &$I_1^{L}\rightarrow0$ & $X_1^L\rightarrow0$ & $Y_1^L\rightarrow1$ & $Z_1^{L}\rightarrow1$\\
         \hline        $I_2^F\rightarrow0$ & $|00\rangle=0$ & $|01\rangle=0$ & $|11\rangle=1$ & $|10\rangle=1$ \\       
         \hline         $X_2^F\rightarrow1$ & $|01\rangle=1$ & $|00\rangle=1$ & $|10\rangle=0$ & $|11\rangle=0$\\
         \hline
    \end{tabular}
\end{table}

In Algorithm \ref{alg-qka-join}, the server and the joining user can obtain each other's operation key based on the Bell state measurement to extract the group keys. Based on the original QKA protocol~\cite{zeng2016multiparty}, the leader uses quantum gates $I$ and $X$ to represent the operation key $0$, and quantum gates $Y$ and $Z$ to represent the operation key $1$; the follower uses quantum gates $I$ and $X$ to represent his operation keys $0$ and $1$, respectively. The leader performs Bell measurements and sends the measurement results to the follower. Then, they can generate the shared key $k_{i}^{\prime}=\kappa_i^s\oplus \kappa_i^u$ based on Table \ref{table-two-key-generation}. For example, if the Bell measurement result is $|01\rangle$, the leader can infer the follower's operation $I$ through his operation $X$, and so can the follower. Therefore, the shared key is $0\oplus0=0$ as shown in Table \ref{table-two-key-generation}.

For example, if the user $u_9$ is the joining user in Fig.\ref{fig-nine-key-tree}, the server needs to jointly generate new group keys with the user $u_9$, and then sends these three rekey messages to other users as:
\begin{equation}\label{eq-join-messages}
\begin{aligned}
s &\stackrel{Q}\longleftrightarrow u_9 &&:\{k_{1-9},k_{789}\}, \\
s &\longrightarrow \{u_7,u_8\}&&:\{k_{1-9}\}_{k_{1-8}},\{k_{789}\}_{k_{78}},\\
s &\longrightarrow \{u_1,u_2,...,u_6\}&&:\{k_{1-9}\}_{k_{1-8}},\\
\end{aligned}
\end{equation}
where the first formula in Eq \eqref{eq-join-messages} represents that the server $s$ and the joining user $u_9$ perform QKA to generate group keys $k_{1-9},k_{789}$, and the number of qubits needed to generate a 1-bit group key is $4$. The second formula indicates that the server $s$ sends the rekey messages $\{k_{1-9}\}_{k_{1-8}}, \{k_{789}\}_{k_{78}}$ to the users $u_7,u_8$. The $\{k_{1-9}\}_{k_{1-8}}$ represents that the server uses the old key $k_{1-8}$ to encrypt the new key $k_{1-9}$ based on symmetric encryption algorithms, such as Advanced Encryption Standard (AES)-256. 

In Algorithm \ref{alg-qka-join}, each new key $k_i^{\prime}$ needs to be generated and distributed to other group users. The new joining user can participate in the generation process of these new keys based on a two-party QKA protocol, so quantum resource consumption of $2(h-1)$ if the decoy states for checking the channel are ignored. In the distribution process, each new key needs to be encrypted once to build the rekey messages, so the encryption cost is $h-1$. Except for the joining user, the rekey messages for each user can be combined into one message, so the number of rekey messages for the new keys is $h-1$, as shown in Eq \eqref{eq-join-messages}. More importantly, the QKA in our algorithm can ensure information-theoretic security of the new keys generated in quantum communications. 

\subsection{QGKA with a leave}

When a user $u$ leaves a group session, the server $s$ updates the key tree by deleting the $k$-node and the $u$-node of the user $u$. To ensure that the leaving user $u$ cannot obtain subsequent communications, those keys $k_i$ associated with $u$ must be updated, i.e., the blue nodes along the path from $k_u$ to the root node $k_0$ in Fig.\ref{fig-N-key-tree}. In the generation process, for the key $k_i$, the server $s$ and the users in $userset(k_i)$ except for $u$ first utilize multi-party QKA to generate the new key $k_i^{\prime}$. Note that each new key is jointly generated by all participants, which can prevent the key from being controlled by a single user or server. In the distribution process, a child of the key $k_i$ is denoted as $k_i^c,c=1,2,...,d$. The server $s$ then uses $k_i^c$ to encrypt the new key $k_i^{\prime}$ to build the rekey messages $\{k_i^{\prime}\}_{k_i^c}$, and redistribute it to other users in $userset(k_i^c)$. Consequently, we propose the QGKA protocol for a leave request, and it is depicted in Algorithm \ref{alg-qka-leave}.

\begin{algorithm}
     \caption{QGKA protocol for a leave request}
     \label{alg-qka-leave}
     \begin{algorithmic}[1]
         \STATE The server $s$ grants the user $u$ a leave request.\\
         \textbf{Generation: }      
         \FOR{$i=0$ to $h-2$}
         \FOR{$c=1$ to $d$}
         \STATE The $k_i^c$ is the child key of $k_i$, In $userset(k_i^c)$, the users randomly select a user as their agent $u_i^c (\neq u)$.
         \ENDFOR
         \STATE The server $s$ prepares a series of $(d+1)$-qubit GHZ states $|GHZ\rangle_{i}$ and shares them with these agent users $\{u_i^c\}_{c=1}^d$. 
         \STATE Similarly to Algorithm \ref{alg-qka-join}, the server $s$ and the agent users $u_i^c$ alternate as a leader for each GHZ state, and others are followers. Based on the operation rule in Table \ref{table-three-key-generation}, the leader and the followers perform operations on their own qubits. The followers then return their entangled particles to the leader.      
        \STATE After channel checking, all the leaders perform GHZ measurements and publish the measurement results. Each participant can obtain the operations of others based on the leader's measurement results in Table \ref{table-three-key-generation}, thus generating the new key $k_{i}^{\prime}=\bigoplus_{c=1}^{d}\kappa_{i}^c\oplus \kappa_i^s$, where $\kappa_{i}^c$ and $\kappa_i^s$ are the operation keys of $u_{i}^c$ and the server $s$, respectively.
        \ENDFOR \\
        \textbf{Distribution:}
        \FOR{$j=0$ to $h-2$}
        \STATE Let $k_j^c$ be the child node key of $k_j$. Let $KM_j=\{k_j^{\prime}\}_{k_j^c}$,
        the server $s$ sends $KM_j$ to other users in $userset(k_j^c)\setminus\{u\}$.
        \ENDFOR           
     \end{algorithmic}
 \end{algorithm}

\begin{table}[htbp]
    \centering
    \caption{Key generation of three-party QKA}
    \label{table-three-key-generation}
    \begin{tabular}{|c|c|c|c|c|}
         \hline
         \diagbox{Follower}{Key}{Leader} &$I_1^{L}\rightarrow0$ & $X_1^L\rightarrow1$ & $Y_1^L\rightarrow0$ & $Z_1^{L}\rightarrow1$\\
         \hline         
         $I_2^FI_3^F\rightarrow00$ & $|000\rangle=0$ & $|011\rangle=1$ & $|111\rangle=0$ & $|100\rangle=1$ \\
         % &\rightarrow0&\rightarrow1&\rightarrow0&\rightarrow1\\
         \hline
         $I_2^FX_3^F\rightarrow01$ & $|001\rangle=1$ & $|010\rangle=0$ & $|110\rangle=1$ & $|101\rangle=0$\\
         \hline
         $X_2^FI_3^F\rightarrow10$ & $|010\rangle=1$ & $|001\rangle=0$ & $|101\rangle=1$ & $|110\rangle=0$\\
         \hline
         $X_2^FX_3^F\rightarrow11$ & $|011\rangle=0$ & $|000\rangle=1$ & $|100\rangle=0$ & $|111\rangle=1$\\
         \hline
    \end{tabular}
\end{table}

In Algorithm \ref{alg-qka-leave}, note that our operations on GHZ states satisfy the Zeng's rule ~\cite{zeng2016multiparty}. That is, when the number of users and servers is odd, the leader's operations $I, X, Y, Z$ represent $0,1,0,1$, respectively, and when the number of users and servers is even, they represent $0,0,1,1$, respectively. The follower's operations $I$ and $X$ always represent $0$ and $1$, respectively. 

For the leader, when her operation is $X$ or $Y$ for each GHZ state, she performs the NOT gate on the measurement result to obtain the operation key. The follower's result $|0\rangle$ represents that her operation key is $0$, and $|1\rangle$ is $1$. For example, the key generation of a three-party QKA is shown in Table \ref{table-three-key-generation}. If the leader performs $Y_1^L$ on the first qubit, and the measurement result of the leader is $|101\rangle$, then her operation key is $0$ and she performs the NOT gates on the measurement result $|101\rangle$ to get $|010\rangle$. The second and third qubits are the result of the followers $|10\rangle$, which indicates that their operations are $X_2^F$ and $I_3^F$, respectively, i.e., $1$ and $0$. Therefore, the leader can generate the agreement key as $0\oplus1\oplus0=1$.

For the follower, when her measurement result is different from her operation key, she performs the NOT gates on the measurement result of GHZ state to get the operation key of each participant. For example, as shown in Table \ref{table-three-key-generation}, if the measurement result is $|101\rangle$, and the follower performs $X_2^F$ on the second qubit, which means that its operation key is $1$. She can perform the NOT gates on the $|101\rangle$ to get $|010\rangle$, which indicates their operation keys $0$, $1$, and $0$. Consequently, the follower can also exact the agreement key as $0\oplus1\oplus0=1$.

For example, if $u_9$ in Fig.\ref{fig-nine-key-tree} is granted to leave the group session, the associated keys for leaving k-node $k_9$ contain the k-node $k_{789}$ and the group key $k_{1-9}$, which need to be updated as $k_{78}$ and $k_{1-8}$, respectively. The server $s$ and the users $\{u_7,u_8\}$ need to perform three-party QKA to generate the key $k_{78}$, the server $s$ and the users $\{k_1,k_4,k_7\}$ can perform four-party QKA to generate the group key $k_{1-8}$, and the number of qubits needed is $7$. The following gives three rekey messages:
\begin{equation}\label{eq-leave-messages}
\begin{aligned}
s &\stackrel{Q}\longleftrightarrow \{u_1,u_4,u_7\} &&:k_{1-8} \\
s &\longrightarrow \{u_2,u_3\} &&:\{k_{1-8}\}_{k_{123}}\\
s &\longrightarrow \{u_5,u_6\} &&:\{k_{1-8}\}_{k_{456}}\\
s &\stackrel{Q}\longleftrightarrow \{u_7,u_8\} &&:k_{78} \\
s &\longrightarrow \{u_8\} &&:\{k_{1-8}\}_{k_{78}} \\
\end{aligned}
\end{equation}

The leaving protocol illustrates how the group keys are generated and distributed the rekey messages to the remaining users, and this is shown in Algorithm \ref{alg-qka-leave}. Note that the number of associated keys for the user leaving is $h-1$. In each key update, the server and $d$ users in the subgroup are required to participate in multi-party QKA. Therefore, the number of qubits needed to generate a $1$-bit group key is $d(h-1)$ ignoring the decoy states for channel checking. In the distribution process, each new key needs to be encrypted $d$ times except the last keys since no subsequent child k-node keys need to be encrypted for distribution to users. Hence, the encryption cost of the leaving protocol is $d(h-2)$. According to the distribution in Algorithm \ref{alg-qka-leave}, the number of rekey messages is $d(h-2)$. Each new key $k_i^{\prime}$ needs to be generated by multi-party QKA, so the number of QKA executions is $h-1$.

\section{Performance evaluation}
In the realm of dynamic quantum group key distribution, our evaluation focuses on two primary indicators: security and quantum resource consumption. In this section, we carry out a comprehensive analysis of our proposed algorithms, examining their performance through the lens of these critical factors. 
\subsection{Security analysis}
% \begin{theorem}
%     Our Algorithms \ref{alg-qka-join} and \ref{alg-qka-leave} are secure.
% \end{theorem}

To illustrate the security of Algorithms \ref{alg-qka-join} and \ref{alg-qka-leave}, we consider two aspects, i.e., generation and distribution of the new keys. In the generation process, we need to update the keys associated with the joining or leaving user $u$, which are the blue k-node keys in the key tree in Fig.\ref{fig-N-key-tree}. In Algorithms \ref{alg-qka-join} and \ref{alg-qka-leave}, each new key is jointly generated by the server and the group users based on QKA. As depicted in Tables \ref{table-two-key-generation} and \ref{table-three-key-generation}, even if the attackers get the measurement results, they will be unable to deduce the generated keys. This means that the new keys consistently exhibit unpredictability for potential attackers. 

For external attacks, we can utilize the decoy state technique to prevent them from obtaining key information. For instance, if Eve eavesdrops on the quantum channels between the server and the users, she will introduce errors in the transmitted quantum sequences, which will cause her to be discovered. For example, when the decoy states $\{|0\rangle,|1\rangle,|+\rangle,|-\rangle\}$ are inserted in the transmitted sequences, the error rate introduced by Eve is $1/4$ with a single qubit~\cite{chun2005secure}, similar to that in BB84 QKD~\cite{bennett1984quantum}. Consequently, the probability that Eve will be detected for $m$ decoy states is $1-(1-1/4)^m\approx1(m\geq20)$. For CNOT attacks~\cite{chun2005secure,gao2010cryptanalysis}, Eve can use the CNOT gate to eavesdrop user's operation key. Eve intercepts the qubit of the user $u$ as the control bit and her own qubit as the target so to perform the CNOT gate on them to build entanglement. After that, She sends the control qubit to the user $u$. When $u$ performs the operation on the qubit, Eve performs the CNOT gate again and measures her own qubit in the $Z$ basis to obtain the operation key of $u$. For example, when $u$'s decoy state is $|+\rangle$ ( $|-\rangle$ ) and Eve's qubit is $|0\rangle$, Eve may be detected based on the measurement result $|-\rangle$ ( $|+\rangle$ ). When $u$ uses $X$ basis to measure, the probability of this situation is $1/2$, as shown in Eq \eqref{eq-cnot-attack}. If the decoy state is $|0\rangle$ or $|1\rangle$, Eve cannot be detected as the decoy state cannot be entangled with Eve's qubit. Consequently, the probability that Eve can be discovered is $1-3/4^m\approx1$ for $m$ decoy states.
\begin{equation}\label{eq-cnot-attack}
\begin{aligned}
  {|+\rangle_u}{|0\rangle_e}&\stackrel{CNOT}\longrightarrow \frac{{{{\left( {\left| {00} \right\rangle  + \left| {11} \right\rangle } \right)}_{ue}}}}{{\sqrt 2 }} = \frac{{{{\left( {\left| { +  + } \right\rangle  + \left| { -  - } \right\rangle } \right)}_{ue}}}}{{\sqrt 2 }}\\
{\left|  -  \right\rangle _u}{\left| 0 \right\rangle _e}&\stackrel{CNOT}\longrightarrow \frac{{{{\left( {\left| {00} \right\rangle  - \left| {11} \right\rangle } \right)}_{ue}}}}{{\sqrt 2 }} = \frac{{{{\left( {\left| { +  - } \right\rangle  + \left| { - {\rm{ + }}} \right\rangle } \right)}_{ue}}}}{{\sqrt 2 }} 
\end{aligned}  
\end{equation}

For internal attacks, if the leader is malicious in our algorithms, the generated key can be controlled by him by changing the measurement results. Therefore, the participants must take turns as the leader to perform the Bell or GHZ measurements in the entangled sequence, which can ensure that the new key is not controlled by one. Therefore, the new key generation process is secure. 

In the distribution process, we need to demonstrate that the joining or leaving user $u$ remains unknown to the old keys in the key tree that are not associated with her, to ensure that she cannot obtain the previous or subsequent communications. In~\cite{campagna2015quantumsafe,pan2018satellite}, note that AES's cipher can adapt to the current quantum attacks by increasing its key size to rectify a vulnerability introduced by quantum computing. 
% It implies that a quantum computer can be made to work just as hard as a conventional computer by doubling the cipher's key length. 
Therefore, AES-256 is as difficult for a quantum computer to break as AES-128 is for a classical computer. To keep the old keys private in the key tree, we use the old keys $k_i$ unknown to $u$ as seed keys of AES-256 to encrypt the new keys $k_i^{\prime}$ to complete the distribution. According to these rekey encryption messages, it is very difficult for the user $u$ to crack them, even if she has quantum computing power. Consequently, the group key distribution process is also secure.

Combining the security analysis of generation and distribution, we can conclude that Algorithms \ref{alg-qka-join} and \ref{alg-qka-leave} are secure.

\subsection{Quantum resource consumption}
To facilitate the analysis of the quantum resource consumpution in various QKA protocols, we give the following notation definitions. Let $N$ is the number of all participants (including users and servers) in a group session, and $\xi$ is the proportion of decoy states in each transmitted quantum sequence between $N$ group members, which is used to check the channel. $n$ is the length of the group key jointly generated by all participants.

In Shukla et al.'s multi-party QKA protocol~\cite{shukla2014protocols}, each participant (including users and servers) needs to prepare a Bell state and some decoy states, and then sends one particle of the Bell state and decoy states to the other participants. These participants perform $N$ transmissions and $N$ encryptions between group members as a round. Finally, the participant can extract a $1$-bit group key through the Bell measurement. For each participant, the number of qubits required to generate a $1$-bit group key is $2+\xi N$. 
% Therefore, for $N$ participants, the total number of qubits required to generate a $n$-bit group key is $C_{bell}(N)=(2+\xi N)nN$.

In Sun et al.'s multi-party QKA protocol~\cite{sun2016efficient}, each participant prepares a $4$-qubit cluster state and some decoy states, and sends one particle of the cluster state and decoy states to others, so that they can perform a round of transmissions and encryptions between group members. She can extract a $2$-bit group key through the cluster measurement. For each participant, the number of qubits required to generate a $1$-bit group key is $2+\xi N/2$.

In Huang et al.'s multi-party QKA protocol~\cite{huang2016improved}, each participant prepares a single photons and $\xi$ decoy states, and sends to the other participants so to perform a round of transmissions and encryptions. She can deduce a $1$-bit group key based on single photon measurement, and the number of qubits required is $1+\xi N$.

In Zeng et al.'s multi-party QKA protocol~\cite{zeng2016multiparty}, the server needs to prepare a $N$-qubit GHZ state and $\xi(N-1)$ decoy states, and sends one particle of the GHZ state and $\xi$ decoy states to each of the other $N-1$ users. Each participant performs encryption operation on her own qubit, then return the qubit and $\xi$ decoy states to one user or server. Based on GHZ measurement, they can deduce a $1$-bit group key for $N$ participants, and the number of qubits required is $N+2\xi (N-1)$.

 % if the key node of each user is only connected to the key server, i.e., the star key graph,
% In quantum group communications,
Based on the above QKA protocols, the quantum resource consumption to generate a $n$-bit group key is estimated as follows:
\begin{equation}\label{eq-cost-star}
\begin{aligned}
    &C_{bell}(N)=(2+\xi N)nN,\\
    &C_{cluster}(N)=(2+\xi N/2)nN,\\
    &C_{single}(N)=(1+\xi N)nN,\\
    &C_{ghz}(N)=(1+2\xi)nN-2\xi n.
\end{aligned}    
\end{equation}
Note that the consumption is $O(N)$ in QKA with GHZ states, and the others are $O(N^2)$, so we use GHZ states as quantum resource to generate group keys in our algorithms. An approximate measure of the quantum resource consumption is the number of qubits required by a join or leave request in a secure group.  

\begin{theorem}
    For a group of $N$ members, the average number of qubits required to generate a $n$-bit group key is $O(\log N)$ in our Algorithms \ref{alg-qka-join} and \ref{alg-qka-leave}.
\end{theorem}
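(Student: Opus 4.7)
The plan is to bound the total qubit consumption of both algorithms by exploiting the logarithmic height of the key tree. Since only the keys on the root-path of the joining or leaving user need refreshing, and this path has length $h-1$, the main task is to multiply the per-round QKA cost by $h-1$ and invoke the identity $N = d^{h-1}$, equivalently $h - 1 = \log_d N$.

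First I would recall from the construction in Section III that in a full balanced tree key graph with degree $d$ and $N$ users, the height satisfies $h - 1 = \log_d N = \Theta(\log N)$ for fixed $d$. Inspection of Algorithm \ref{alg-qka-join} and Algorithm \ref{alg-qka-leave} shows that each performs exactly $h-1$ QKA rounds, one per blue $k$-node on the affected root-path in Fig.\ref{fig-N-key-tree}. Hence the total qubit budget of either algorithm is $(h-1)$ times the per-round cost of the underlying QKA primitive.

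For Algorithm \ref{alg-qka-join} (join), each round is a two-party Bell-state QKA between the server and the new user, so the per-round cost is one Bell pair plus a $\xi$-fraction of decoy qubits on each transmission, i.e.\ a constant number of qubits per bit of key, independent of $N$. Summing over the $n$ bits and the $h-1$ rounds gives a total of $O(n \log_d N)$ qubits per join, so the average per bit is $O(\log N)$. For Algorithm \ref{alg-qka-leave} (leave), each round is a $(d+1)$-party GHZ-state QKA between the server and the $d$ subgroup agents; substituting $N' = d+1$ into $C_{ghz}(N')$ from Eq.\ \eqref{eq-cost-star} gives $[(d+1) + 2\xi d]\, n$ qubits per round. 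Since $d$ is a fixed parameter of the tree, the round cost is constant in $N$, so summing over the $h-1$ rounds again yields $O(n \log N)$ qubits per leave, i.e.\ $O(\log N)$ per bit on average.

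The main subtlety I foresee is making ``average'' precise, since the bound really depends on the tree being (approximately) balanced. If the joining or leaving user is drawn uniformly at random from the leaves, one should argue that every root-path has length $\Theta(\log N)$; this follows immediately once standard rebalancing of the key tree is assumed, as in the classical tree-key-graph literature~\cite{wong2000graph}. With that observation in place, the theorem reduces to elementary arithmetic on the per-round costs already pre-computed in Eq.\ \eqref{eq-cost-star}, and does not require any new quantum-information machinery beyond the QKA primitives used inside Algorithms \ref{alg-qka-join} and \ref{alg-qka-leave}.
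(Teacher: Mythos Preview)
Your proposal is correct and follows essentially the same route as the paper: use $h-1=\log_d N$, multiply the constant per-round QKA cost by the $h-1$ rounds along the affected root-path, and conclude $O(\log N)$. Two small refinements the paper makes that you might adopt: (i) ``average'' there means the arithmetic mean of the join cost and the leave cost, not an average over a random leaf, so no balancing argument is needed; (ii) in the leave case the bottom round involves only $d$ parties (the departing user is already gone), so the paper writes $C_{ghz}(d{+}1)(\log_d N-1)+C_{ghz}(d)$ rather than $C_{ghz}(d{+}1)\log_d N$, though this does not affect the asymptotics.
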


\begin{proof}
For a key tree, $d$ and $h$ are the degree and height of the tree, respectively. If the key tree is used to instruct dynamic QGKA, we update only the keys of the $k$-node path of a joining or leaving node in the tree, as shown in Fig.\ref{fig-N-key-tree}. Note that a full and balanced key tree satisfies $N=d^{h-1}$, then $h-1=\log_d N$. Therefore, we need to update the $\log_d N$ keys at most. 

In Algorithms \ref{alg-qka-join} and \ref{alg-qka-leave}, note that the generation is used to update the keys of the associated path nodes in the key tree, which is the process of quantum resource consumption. The distribution is used to construct the rekey messages and distribute the new keys to group users, which is the process of classical information encrypted transmission. Therefore, we only need to estimate the number of keys updated in the generation process for a join and leave, i.e., $\log_d N$. Combined with Eq \eqref{eq-cost-star}, the join, leave, and average consumption of quantum resources in the key tree is given, respectively, as follows.
% \begin{equation}
%     \begin{aligned}
% C_{join}(N)=&C_{ghz}(2)\log_d N\\
% =&2(1+\xi)n\log_d N,\\
% C_{leave}(N)=&C_{ghz}(d+1)(\log_d N-1)+C_{ghz}(d)\\
% =&((1+2\xi)d+1)n\log_d N-(1+2\xi)n,\nonumber
%     \end{aligned}
% \end{equation}

\begin{equation}\label{eq-tree-consumption}
\begin{aligned}
C_{join}(N)=&C_{ghz}(2)\log_d N\\
=&2(1+\xi)n\log_d N,\\
C_{leave}(N)=&C_{ghz}(d+1)(\log_d N-1)+C_{ghz}(d)\\
=&((1+2\xi)d+1)n\log_d N-(1+2\xi)n,\\
C_{avg}(N)=&\frac{C_{join}(N)+C_{leave}(N)}{2}\\
=&\frac{((1+2\xi)d+3+2\xi)n\log_d N-(1+2\xi)n}{2}\\
\sim& O(\log_d N).    
\end{aligned}
\end{equation}
For each time a user joins or leaves, the number of group members changes only slightly, so the degree $d$ of the key tree can be regarded as a constant before and after a user leaves.
% For $d$ is a constant for a full and balanced key tree.
As a result, the average number of qubits required to generate $1$-bit group key in our algorithms is $O(\log N)$. It means that the number of qubits required per join or leave only increases linearly with the logarithm of the group size. Dynamic QGKA based on the key tree can be scalable to large groups with frequent joins and leaves.
\end{proof} 
% \begin{table}[htbp]
%     \centering
%     \caption{quantum resource consumption of A join/leave request}
%     \label{table-QR-comsumption}
%     \begin{tabular}{|c|c|c|}
%          \hline
%          Request & Star & Tree\\
%          \hline
%          Join & $2(1+\xi)$ & $2(1+\xi)\log_d N$   \\
%          \hline
%          Leave & $(1+2\xi)(N-1)-2\xi$ & $((1+2\xi)d-2\xi)\log_d N$  \\
%          \hline
%          Average & $(1+2\xi)N/2$ & $((1+2\xi)d+2)\log_d N/2$ \\
%          \hline
%     \end{tabular}
% \end{table}

\section{Comparison results}
In this section, we discuss the degree of the key tree with group sizes and compare the consumption results with various quantum resources to demonstrate the advantages of the key tree in dynamic QGKA.

\subsection{Discussions on the degree of the tree key graph} 

The degree $d$ of the tree is the number of incoming edges in the key tree, i.e., the number of users in the subgroup. According to Eq \eqref{eq-tree-consumption}, note that it is directly related to the average consumption of quantum resources. Consequently, we need to find the optimal degree $d$ of the tree key graph for minimum consumption. 

\begin{figure}
    \centering
    \includegraphics[width=3in]{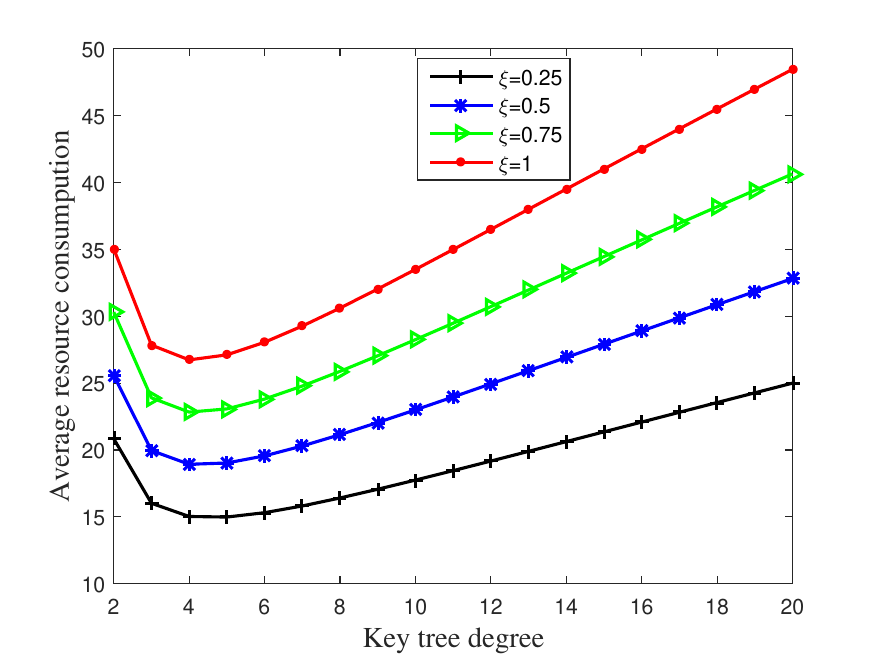}
    \caption{The average number of required qubits for a join and leave vs. the key tree degree. }
    \label{fig-optimal-degree}
\end{figure}

In Fig.\ref{fig-optimal-degree}, we show the relationship between the average number of required qubits and the tree degree when a user joins or leaves a session group. Black, blue, green, and red lines represent the results under the proportion of decoy states $\xi=0.25, 0.5, 0.75$, and $1$, respectively. From Fig.\ref{fig-optimal-degree}, note that when the degree is about $d=4$, the average number of required qubits reaches a minimum, and we can achieve the optimal degree of the key tree.

\subsection{Consumption comparison in the original QKA protocols} 
Based on Eq \eqref{eq-cost-star}, we can obtain the quantum resource consumption in the various original QKA protocols. To obtain the optimal QKA, we show the number of qubits required to generate the group key in these protocols.

Fig.\ref{fig-nontree-graph} shows the relationship between the number of required qubits and the number of users in four types of quantum resources. When we only use the original QKA protocols to distribute the group key, the number of qubits required is obtained based on Eq \eqref{eq-cost-star}. Black, blue, green, and red lines represent the results of QKA protocols with different quantum resources, which are Bell states~\cite{shukla2014protocols}, Cluster states~\cite{huang2016improved}, single photon states~\cite{sun2016efficient}, and GHZ states~\cite{zeng2016multiparty}, respectively. 

\begin{figure}
    \centering
    \includegraphics[width=3in]{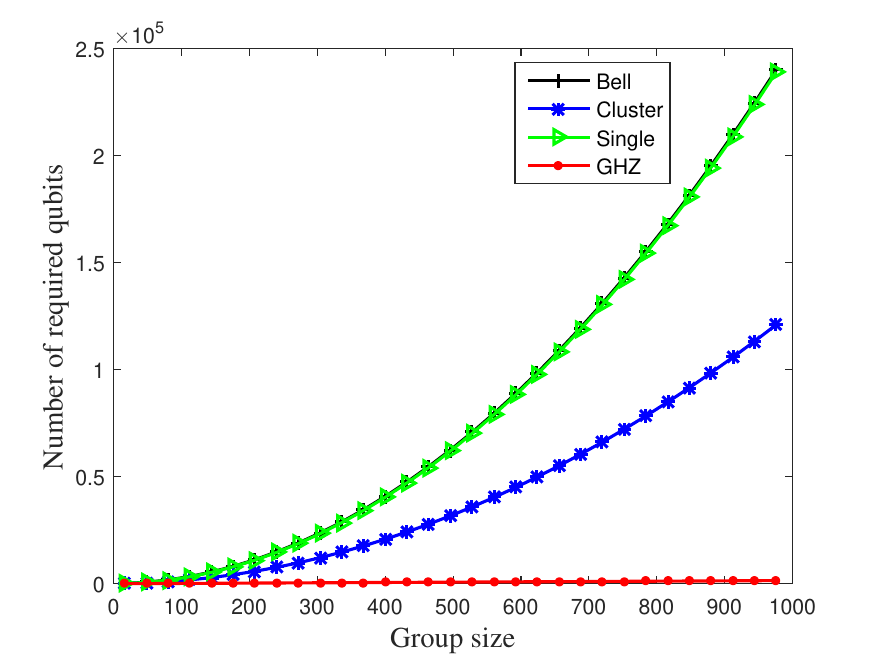}
    \caption{The number of required qubits in original QKA protocols vs. the number of users. }
    \label{fig-nontree-graph}
\end{figure}

From Fig.\ref{fig-nontree-graph}, note that the number of required qubits grows rapidly with group size $N$ when QKA protocols are used to distribute group keys based on the star key graph. Due to the large consumption of quantum resources, these protocols are difficult to apply directly to large-scale group communications. In the comparison of quantum resource consumption, the number of qubits required by QKA with GHZ state is much smaller than the other three. Since the GHZ state can be transmitted in a multicast manner, the consumption increases linearly with the group size. While the other three protocols are unicast, so more qubits are needed to agree on the group key. Consequently, QKA with GHZ state can provide better performance during group key generation.

\subsection{Consumption comparison for the frequent joins and leaves}
In Algorithms.\ref{alg-qka-join} and \ref{alg-qka-leave}, we introduce the key tree into QKA to address dynamic quantum group communication for user joins or leaves. To evaluate the performance of our algorithms, we show quantum resource consumption for the frequent joins or leaves in the tree key graph. Assuming that the number of users to join or leave satisfies the Poisson distribution, we perform the following simulations in dynamic groups.

\begin{figure}
    \centering
    \includegraphics[width=3in]{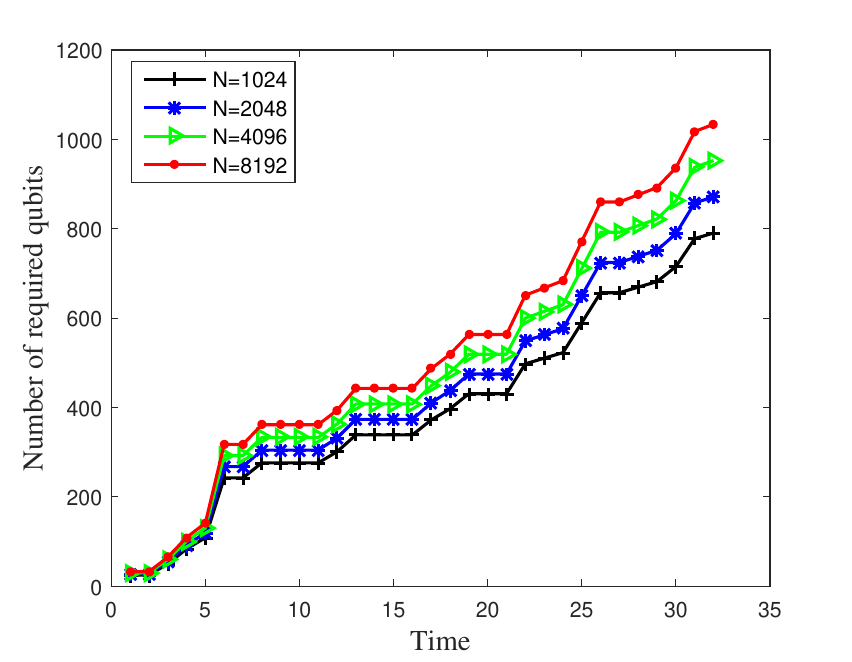}
    \caption{The number of required qubits for user joins and leaves vs. time in the key tree based on GHZ. Within a given time unit, the number of users who join or leave satisfies the Poisson distribution, i.e., \textit{Poisson}($\lambda=1$).}
    \label{fig-dynamic-ghz}
\end{figure}

\begin{figure}
    \centering
    \includegraphics[width=3in]{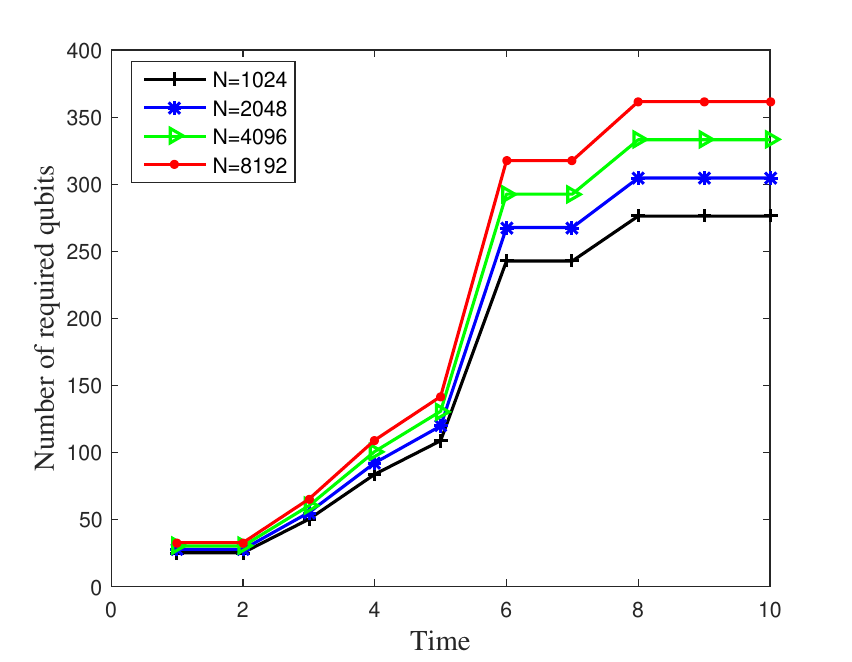}
    \caption{The number of required qubits for joins and leaves vs. time in the key tree based on GHZ. It is the partial enlargement of Fig.\ref{fig-dynamic-ghz}.}
    \label{fig-dynamic-ghz-magnify}
\end{figure}

\begin{figure}
    \centering
    \includegraphics[width=3in]{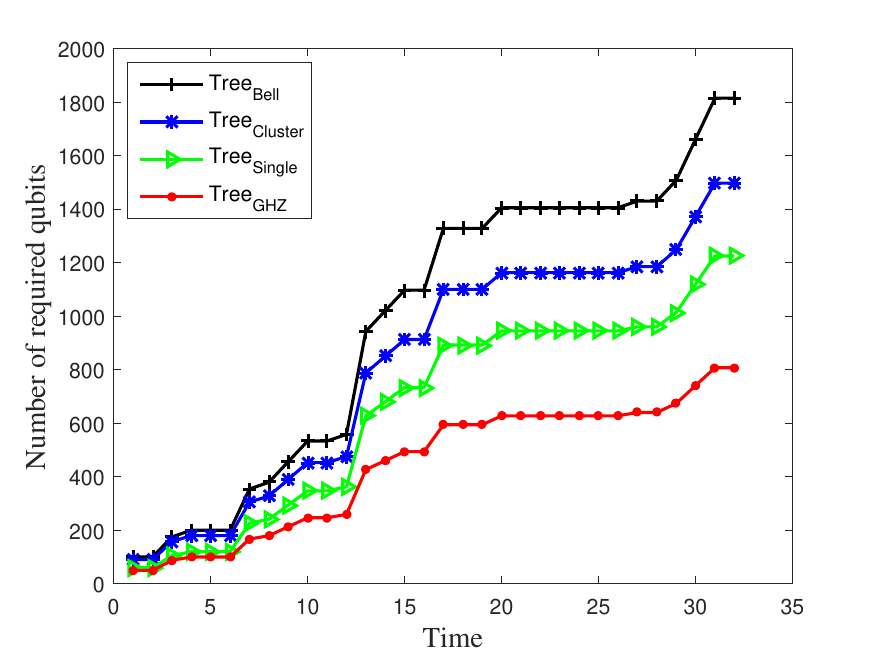}
    \caption{The number of qubits required for joins and leaves vs. time in the key tree based on various QKA protocols. The initial group size is $1024$. The number of users who join or leave during each time unit follows the Poisson distribution, i.e., \textit{Poisson}($\lambda=1$).}
    \label{fig-dynamic-tree-graph}
\end{figure}

\begin{figure}
    \centering
    \includegraphics[width=3in]{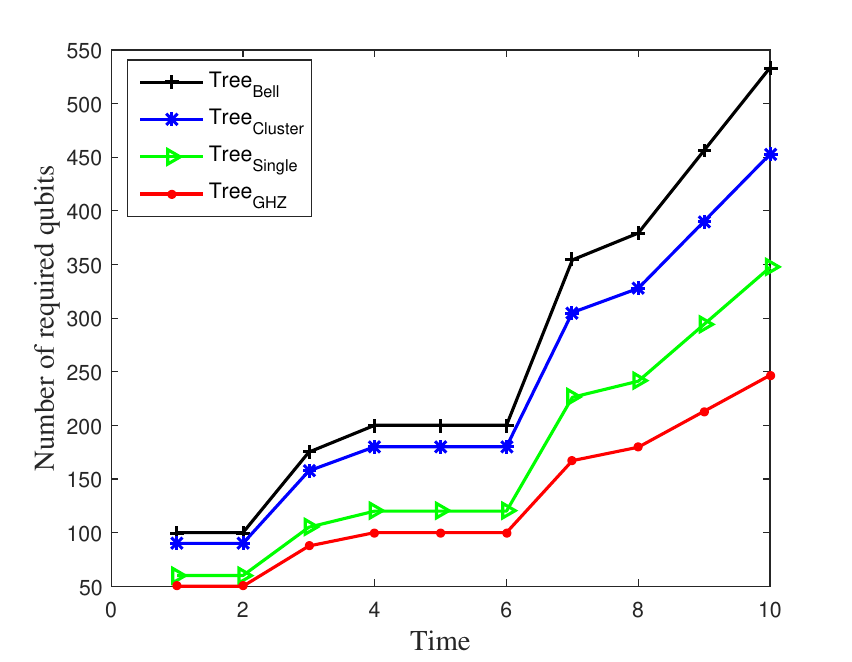}
    \caption{The number of qubits required for user joins and leaves vs. time in the key tree based on different quantum resources. It is the partial enlargement of Fig.\ref{fig-dynamic-tree-graph}.}
    \label{fig-dynamic-tree-graph-magnify}
\end{figure}

% \begin{figure}
%     \centering
%     \includegraphics[width=3in]{figures/dynamic-transmission.pdf}
%     \caption{The number of qubits transmitted for user joins and leaves vs. time in the key tree based on different quantum resources.}
%     \label{fig-dynamic-transmission}
% \end{figure}

% \begin{figure}
%     \centering
%     \includegraphics[width=3in]{figures/dynamic-gate.pdf}
%     \caption{The number of quantum gates used for user joins and leaves vs. time in the key tree based on different quantum resources.}
%     \label{fig-dynamic-gate}
% \end{figure}
Fig.\ref{fig-dynamic-ghz} shows the cumulative number of qubits required changes over a given period of time as users join and leave when the tree key graph is applied to our algorithms. The black, blue, green, and red lines represent the group sizes of $N=1024$, $2048$, $N=4096$, and $8192$, respectively. In our discussion, the degree of the tree key graph $d$ is 4, the length of the group key $n$ is 1, and the proportion of decoy states $\xi$ in each quantum sequence is $0.25$. A unit time is a given period of time during which the number of users who join or leave satisfies the Poisson distribution ($\lambda=1$). From Fig.\ref{fig-dynamic-ghz}, note that the cumulative number of qubits required gradually increases as users join and leave over a period of time, and consumption grows with group size.  

Fig.\ref{fig-dynamic-tree-graph} shows the dynamic changes of the cumulative number of qubits required, qubits transmitted, and quantum gates used over a period of time as users join or leave the communication group, respectively. Our proposed Algorithms.\ref{alg-qka-join} and \ref{alg-qka-leave} use the tree key graph to instruct the GHZ-based QKA to update the group key for user joins or leaves, which is the red line. The black, blue, green, and red lines represent the simulation results of the key tree used in the Bell state~\cite{shukla2014protocols}, Cluster state~\cite{huang2016improved}, Single-photon state~\cite{sun2016efficient}, and GHZ state~\cite{zeng2016multiparty}-based QKA protocols, respectively. During each time unit, users can randomly join or leave the group, and their number follows the Poisson distribution, i.e., \textit{Poisson}($\lambda=1$). The initial size of the group is $1024$, and the other relevant parameters are the same as in Fig.\ref{fig-dynamic-tree-graph}. Compared to the original QKA protocols in Fig.\ref{fig-nontree-graph}, the group key update in Fig.\ref{fig-dynamic-tree-graph} requires a small number of qubits. This is because the key tree only needs to update a small number of hierarchical keys in each group key update. In addition, our algorithms (red line) show that the cumulative number of qubits required is less than that in other cases. This reason is that the GHZ-based QKA in the key tree can utilize a multicast manner to perform the sharing, transmission, and quantum gate operations on multi-qubit entangled states, respectively. In contrast, alternative scenarios need to execute these operations cyclically in a unicast approach to generate the group key. As a result, our algorithms can obtain better performance in dynamic quantum group communications. 

Our Algorithms \ref{alg-qka-join} and \ref{alg-qka-leave} demonstrate that the key tree can be used in GHZ-based QKA to reduce the quantum resource consumption for a join or leave. The key tree can yield superior improvement in dynamic quantum group communications, which is scalable to large groups with frequent joins and leaves. These findings highlight the potential of the key tree as the group key management strategy for scaling quantum group communications.

\section{Conclusions}
For large group sessions, to ensure that a user who joins or leaves cannot obtain the previous or subsequent communications, we propose two QGKA protocols for a join or leave in the key tree framework to instruct the server to update the group key, as shown in Algorithms \ref{alg-qka-join} and \ref{alg-qka-leave}. Compared with the original QKA protocols, our proposed Algorithms only require $O(\log N)$ qubits, which greatly reduces the consumption of quantum resources. This illustrates that the key tree-based hierarchy logical structure has great potential in dynamic quantum group communications. In future work, we will continue to conduct our exploration of the integration of key tree mechanisms into quantum key distribution protocols to enhance the secure group key rate in large-scale quantum group communications.

% \section*{Acknowledgment}

\bibliographystyle{IEEEtran}
\bibliography{reference}

\end{document}